\newtheorem{theorem}{Theorem}
\def\BibTeX{{\rm B\kern-.05em{\sc i\kern-.025em b}\kern-.08em
    T\kern-.1667em\lower.7ex\hbox{E}\kern-.125emX}}
\newcommand{\tool}{CountChain}
\begin{document}

\title{\tool: A Decentralized Oracle Network for Counting Systems}

\author{
\IEEEauthorblockN{1\textsuperscript{st} Behkish Nassirzadeh}
\IEEEauthorblockA{\textit{University of Waterloo} \\
Waterloo, Canada \\
bnassirz@uwaterloo.ca \orcidlink{0000-0002-3227-1409}}
\and

\IEEEauthorblockN{2\textsuperscript{nd} Stefanos Leonardos}
\IEEEauthorblockA{\textit{King's College London} \\
London, UK \\
stefanos.leonardos@kcl.ac.uk \orcidlink{0000-0002-1498-1490}}
\and

\IEEEauthorblockN{3\textsuperscript{rd} Albert Heinle}
\IEEEauthorblockA{\textit{CoGaurd} \\
Waterloo, Canada \\
albert.heinle@googlemail.com}
\and

\IEEEauthorblockN{4\textsuperscript{th} Anwar Hasan}
\IEEEauthorblockA{\textit{University of Waterloo} \\
Waterloo, Canada \\
ahasan@uwaterloo.ca \orcidlink{0000-0003-4103-7945}}
\and

\IEEEauthorblockN{5\textsuperscript{th} Vijay Ganesh}
\IEEEauthorblockA{\textit{Georgia Institute of Technology} \\
Atlanta, USA \\
vganesh@gatech.edu \orcidlink{0000-0002-6029-2047}}
}

\maketitle

\begin{abstract}
Blockchain integration in industries like online advertising is hindered by its connectivity limitations to off-chain data. These industries heavily rely on precise counting systems for collecting and analyzing off-chain data. This requires mechanisms, often called oracles, to feed off-chain data into smart contracts. However, current oracle solutions are ill-suited for counting systems since the oracles do not know when to expect the data, posing a significant challenge.

To address this, we present \tool, a decentralized oracle network for counting systems. In \tool, data is received by all oracle nodes, and any node can submit a proposition request. Each proposition contains enough data to evaluate the occurrence of an event. Only randomly selected nodes participate in a game to evaluate the truthfulness of each proposition by providing proof and some stake. Finally, the propositions with the outcome of True increment the counter in a smart contract. Thus, instead of a contract calling oracles for data, in \tool, the oracles call a smart contract when the data is available. Furthermore, we present a formal analysis and experimental evaluation of the system's parameters on over half a million data points to obtain optimal system parameters. In such conditions, our game-theoretical analysis demonstrates that a Nash equilibrium exists wherein all rational parties participate with honesty.
\end{abstract}

\begin{IEEEkeywords}
Counting System, Blockchain, Decentralized Oracle, Smart Contract 
\end{IEEEkeywords}

\section{Introduction}
\label{intro}

Counting systems are mechanisms that keep track of the occurrence of an event. These systems are essential for industries like online advertising, where numerous transactions transpire quickly. In most counting systems, precision is vital. However, conventional counting system designs are prone to inaccuracy, single point of failure, and trust issues. In online advertising, independent data recording by one or multiple trusted parties may result in discrepancies, costing honest parties billions of dollars\cite{AdFraud}. These challenges can be mitigated using a blockchain-based approach, which leverages smart contracts and Decentralized Oracle Networks (DONs)\cite{chainlinkblog}\cite{coindesk}.

Blockchain is subject to a significant limitation, the \emph{blockchain oracle problem}, which is the smart contracts' limited ability to interact with off-chain data. This limitation is one of the main barriers to overcome if Blockchains are to achieve mass adoption across various markets\cite{coindesk}\cite{chainl}\cite{augur}. To fully address the oracle problem, a decentralized oracle solution is required to eliminate data manipulation, inaccuracy, and single point of failure. A DON accomplishes this by bringing together numerous independent oracle nodes and multiple reliable data sources to establish decentralization from end to end\cite{chainl}.

\begin{figure}[t]
\centerline{\includegraphics[width= 0.6 \textwidth/2]{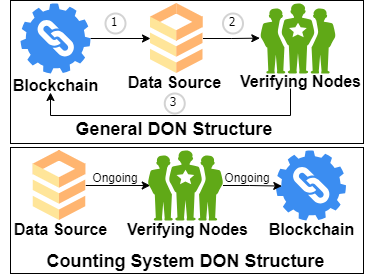}}
\centering
\caption{General DON vs. Counting System DON Design}
\label{problem}
\end{figure}

\label{problemstatement}
\textbf{Problem Statement:} Most of the current DONs fail to provide a reliable solution for high-scale and precise counting systems. This is a significant challenge to overcome if billion-dollar industries like online advertisement that rely heavily on precise counting systems are to adopt blockchain. In the current DONs, smart contracts could call an outside source to gather off-chain data. The data is fed to multiple oracles in a DON to be verified. Once the trustworthiness of the data is validated, the data is fed back to the smart contract. Therefore, a smart contract initiates the process and knows when to expect the off-chain data and some error is generally tolerated.

In counting systems, events can be continuous and occur unpredictably. Therefore, the smart contract does not necessarily know when to expect the data. The data source collects the information and sends it to the verifying oracle nodes in a DON. Once the DON validates the trustworthiness of the source and data, it triggers a condition in the smart contract, which then executes and receives the data. Figure ~\ref{problem} illustrates the difference between the existing DONs and the one needed for a counting system.

In context, a DON like ChainLink\cite{chainl} can be utilized to acquire the current ETH price. This involves creating a Service Level Agreement (SLA) contract to specify request terms. Oracles are then tasked with providing the data to an aggregating contract, which determines and reports the outcome to the requester. Finally, oracles are rewarded financially for their participation. In this case, discrepancies are resolved using techniques such as taking the median of results.

This is not the case for most counting systems. For instance, tracking ad impressions on a website involves multiple invisible pixels sending data to predefined endpoints, and data loss occurs if they become unavailable. For this, one can use ChainLink to implement an SLA contract to request the number of ad views in a certain period, and the oracles will respond with a number. Then, the Aggregating contract reports the aggregation of the results. The main problem with this approach is that it is challenging to resolve discrepancies between the oracles. For instance, if three oracles report ten views and two report nine, the network cannot decide if the two oracles missed counting the same or different ad views. The second approach is implementing SLA as a True/False proposition (i.e., whether user X viewed the ad on website Y). This approach is prone to a single point of failure (the requester might miss some data). The last approach combines the two where the SLA contract requests for the number of ad views in a certain period, one oracle submits a proposition, and the other oracles agree or disagree. This approach requires a new consensus protocol in addition to the current ChainLink protocol so that the oracles do not cheat or lazy vote.

Additionally, scalability can be a challenge, and providing a financial reward for every request requires many transactions, which increases the system's costs. Also, if the event is ongoing, the SLA needs to be implemented as such, which can increase the design's complicity. Finally, in most counting systems, precision is crucial as compensation is based on the exact count. For online advertising, the impression discrepancy caused by traditional techniques is estimated to cost honest players up to 20\% of their revenue annually (up to \$80B), underscores the urgent need for more reliable solutions\cite{AdFraud}\cite{20p}. Current approaches, such as relying on median values or powerful parties like Google, often fall short. Moreover, in most cases, additional information like user IDs and timestamps is required for accurate record reconciliation. This is why the current DONs are inadequate.

\textbf{Approach and Results:} We propose \tool, a DON designed for counting systems. All the blockchain nodes in the system receive the counting data from a source. The nodes can be a \emph{submitter} or a \emph{verifier}. The submitter pays a fee to submit boolean \emph{propositions} to the system. These propositions contain enough information for verifiers to validate their correctness. Verifiers are chosen randomly by the network, and the verifiers must vote True or False and provide proof and a fee as a stake. The majority of the verifiers' votes decide the proposition's outcome. The nodes receive points for raising a proposition or a valid vote for a True proposition. The accumulated points are then converted to financial rewards. Meanwhile, a dishonest vote/proposition results in the node losing money and points. The system is designed to encourage players to vote honestly. This system can be implemented separately or on top of the existing platforms like ChainLink. Hence, in this paper, we make three major contributions:
\begin{enumerate}
    \item The design and implementation of CountChain, a DON tailored for counting systems that offers trust, scalability, and resistance to common attacks.
    \item A game-theoretical analysis that demonstrates the existence of a Nash equilibrium where honesty is the optimal strategy.
    \item Extensive experiments, involving over half a million data instances, to demonstrate \tool's feasibility and resilience to Sybil attacks.
\end{enumerate}

\begin{figure*}[t]
\includegraphics[width=0.6\textwidth ]{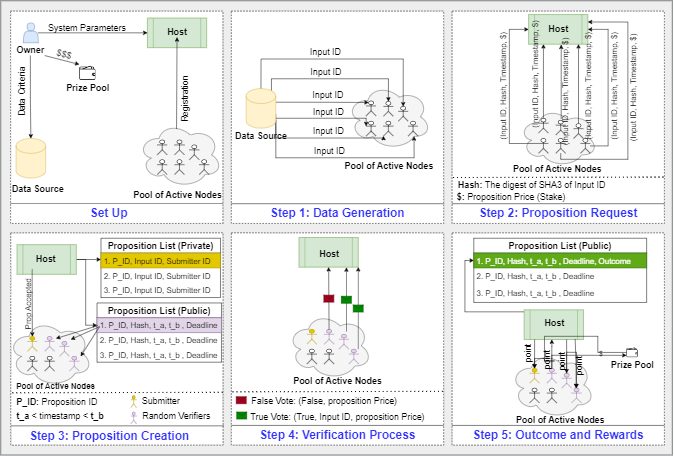}
\centering
\caption{An Overview of \tool}
\label{fig1}
\end{figure*}

\section{Preliminaries}
\label{background}

\subsection{Decentralized Oracle Networks (DONs)}
Smart contracts' limited ability to interact with the off-chain data is a significant limitation\cite{chainlinkblog}. An intermediary system called an oracle is necessary to establish a connection between smart contracts and off-chain data. DONs are used to input off-chain data into a blockchain to combat these challenges. A DON offers a solution to single points of failure in smart contracts by employing multiple data sources. Doing so ensures end-to-end reliability and makes it possible for high-value smart contracts to operate in low-trust environments.

\subsection{Related Work}
\label{related}
The challenge of smart contracts accessing the off-chain data has spawned numerous DON proposals. ChainLink\cite{chainl} aims to provide a cross-chain platform for internet-available data such as tamper-proof price data, automation functions, and external APIs. Neo Oracle Service\cite{neo} facilitates smart contracts to access external data by having oracle nodes designated by the committee fetch the data. Augur\cite{augur} and Gnosis\cite{gn} are prediction market platforms built upon Ethereum. TON\cite{TON} is a decentralized messaging application where messages are fed to the blockchain and verified by smaller agents. A Dag-Based Decentralized Oracle Model\cite{DAG} is a decentralized oracle system that validates the input data, such as the current Bitcoin price, from a centralized source by using a decentralized directed acyclic graph mode. Distributed blockchain Price Oracle\cite{dbpo} is a distributed system that provides exchange rates of digital assets to smart contracts. This network is used as a safety for decentralized finance applications. Astraea\cite{astrea} is a DON based on a voting game that decides the truth or falsity of propositions. Finally, A Smart Contract System for Decentralized Borda Count Voting\cite{borda} is a self-tallying decentralized e-voting protocol for a ranked-choice voting system based on Borda count.

With all these designs, decentralized solutions have yet to be presented for a counting system that can scale high. As mentioned in the problem statement part of Section ~\ref{problemstatement}, one needs to make a significant effort to use these systems as a DON for counting systems.

\begin{table}[t]
\centering
\begin{tabular}{|@{\hskip 2pt}p{0.58in}|p{1.8in}|p{0.5in}@{\hskip 2pt}|}
\multicolumn{1}{@{}l}{\textbf{Term}} & \multicolumn{1}{@{}l}{\textbf{Definition}} & \multicolumn{1}{@{}l}{\textbf{Set by}}\\
\hline

owner & the party interested in obtaining the result of the counting system  & network \\
\hline

host & platform storing all information & owner/ network  \\
\hline

player & participating nodes & network \\

\cline{2-3}

\multicolumn{1}{|r|}{$\sim$ ID} & unique identifier of each player & host\\

\cline{2-3}

\multicolumn{1}{|r|}{$\sim$ point} & the points accumulated
for each vote or proposition submission & host \\

\hline

proposition & a boolean request to be validated & submitter \\

\cline{2-3}

\multicolumn{1}{|r|}{$\sim$ deadline} & the deadline to vote for a proposition & owner/ host \\

\cline{2-3}

\multicolumn{1}{|r|}{$\sim$ pool} & the list of available propositions to be verified & host \\

\cline{2-3}

\multicolumn{1}{|r|}{$\sim$ price} & the stake used when players vote for propositions & owner \\

\hline

submitter & the player who submits a proposition to be verified & players/ host \\

\hline

verifier & randomly chosen player to verify a proposition & host \\

\hline

input ID & unique identifier of the input data & data source  \\
\hline

prize pool & financial reward and incentive & owner/ players \\

\hline

\end{tabular}
\vspace{0.4cm}
\caption{Summary of the terminology.}
\vspace{-0.8cm}
\label{table0}
\end{table}

\section{Description of \tool}

\subsection{System Assumptions}
In our design, we made the following assumptions:
\begin{enumerate}
    
    \item pseudo-random number generation is possible.
    
    \item A proposition $p$ has a truth value of either True or False.
    
    \item There are $N$ nodes numbered from $1$ to $N$. For each $p$ and each node $i \in [1, N]$, let  $b_i(p) \in \{T, F\}$ denote the belief of $i$ regarding the truth value of $p$. Each $i$ has an accuracy $q_i(p) \in [0, 1]$ that is, informally, the probability that $i$ is correct about their vote to a given proposition.

    \item Each node's belief is independent of others and honest nodes always vote truthfully.
    
\end{enumerate}

\subsection{Setup and Terminology}
The terminology summary is provided in Table ~\ref{table0}. This system requires an \emph{owner} and a \emph{host} to set the system parameters. The host is responsible for storing the data regarding \emph{propositions} and the information of the nodes. The nodes (oracles) are the validators of the system, similar to the miners in a blockchain network. The host is responsible for choosing a random selection process to ensure proper distribution of randomness across all nodes. The owner is the party interested in obtaining the result of the counting system. The owner provides the system with the initial funds and is responsible for defining the criteria for the counting object and other system parameters. The host and owner could be constructed using a separate smart contract as the host and the contract owner or a different contract as the owner.

Each event or object being counted is assigned a unique identifier, known as the Input ID. The system is based on a set of \emph{propositions} with covert truth values and associated price amounts, represented by \emph{proposition price}, used in a voting game. Each proposition has a termination condition by which the proposition outcome is decided. The owner sets this termination condition. Generally, the termination condition is the \emph{proposition deadline}, a deadline by which all the votes must be submitted. The system's fund or the \emph{prize pool} is the accumulation of the initial fund provided by the owner and the stakes collected from the dishonest players. The propositions are stored in the propositions Pool.

Each player (node) in the system is given two entities of \emph{player ID} and \emph{player point}. The player ID is a unique identification of the player (i.e., its blockchain address), and the player point is the points accumulated by the player for submitting a Vote or proposition. The final reward of each player is decided based on the player points. Finally, each player can have two roles in this system: the \emph{submitter} that submits a proposition to be verified by the system, and the \emph{verifier} that verifies the validity of each assigned proposition.

\begin{figure}[t]
\centering
\includegraphics[width=0.8\linewidth]{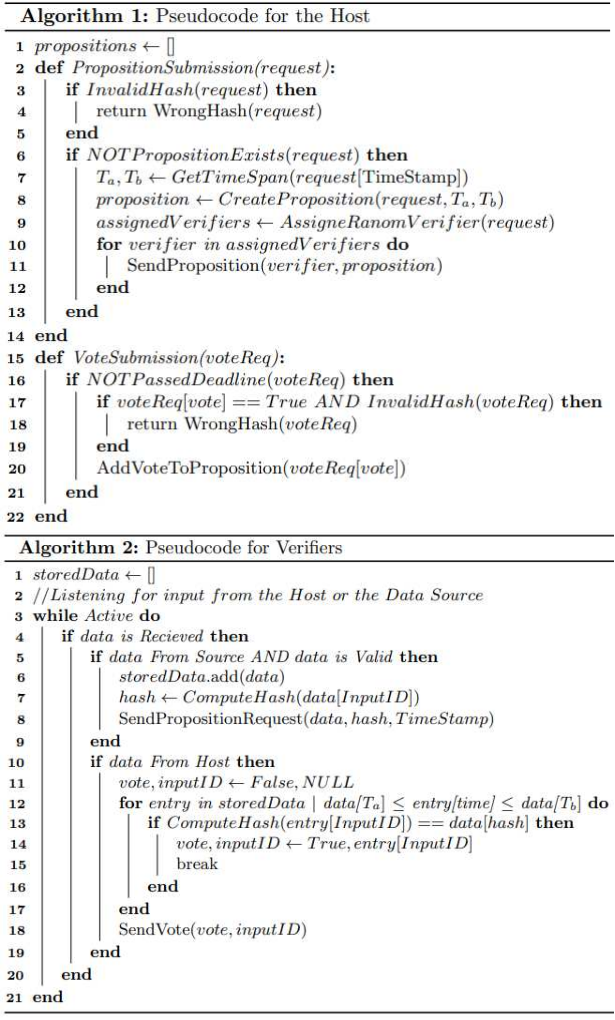}
\caption{Pseudocode for the Host and Verifiers}
\label{algs}
\end{figure}

\subsection{System Description}
\textbf{Step 1: } The data source sends the data regarding the object being counted to all the active players. Each player stores this data independently. The players know the criteria for counting objects ahead of time. For example, if the system counts the number of visitors to a particular website, the visitor's IP address is the Input ID, and the players can receive the required information using invisible pixels. 

\textbf{Step 2: }Once the players observe a valid event, they submit a proposition request to the host and pay the proposition price as a stake. This can be done by calling the host smart contract. The proposition request should contain the Input ID, Timestamp, player ID and Hash(Input ID) which is the digest of the cryptographic hash function of the Input ID.

\textbf{Step 3: }Once the host receives a valid proposition request, it generates a proposition with a new format, adds it to the \emph{proposition pool}, and assigns the proposition to some random players. These selected players become the verifiers of the proposition. The generated proposition by the host does not contain the Input ID; instead, it contains the Hash(Input ID). Also, the Timestamp is converted to  $T_a$ and $T_b$, where the Timestamp is the midpoint of $T_a$ and $T_b$ to account for any network delays. If needed, a second identifier (e.g., Ad ID) can be added to boost uniqueness within this time frame. The proposition also contains the proposition deadline, the time the assigned verifiers must submit their votes. The player who submitted the proposition request becomes the submitter of the proposition. The propositions are publicly available to all players, but only the assigned verifiers can vote. The owner decides the number of assigned verifiers for each proposition and is fixed for all the propositions.

\textbf{Step 4: } The assigned verifiers calculate the hash of all the Input IDs in the period of $T_a$ and $T_b$. If they find the corresponding hash as the Hash(Input ID) in the proposition, they vote True and submit the Input ID; otherwise, they vote False. If the assigned verifiers do not submit their votes by the deadline, it is regarded as voting False. The assigned verifiers must also pay the proposition price as a stake. Using hash functions makes it difficult for the verifiers to guess the correct Input ID given the Hash(Input ID) if they do not have the correct information in their records. On the other hand, it is relatively cheap to find the corresponding input ID from the limited data received. Therefore, when they vote True, they provide the input\_ID as proof of honesty.

\textbf{Step 5: } The majority of the votes decide the outcome of the proposition. The verifiers in the majority get rewarded, and those in the minority get penalized. If the owner requires an even number of verifiers for each proposition, the submitter's vote is counted as a True vote to break the tie.

Choosing a fixed random number of verifiers gives all propositions an equal number of verifiers; hence, the results are similar in accuracy. It also reduces the risk of Sybil attacks by ensuring players do not team up to pick one proposition to vote. The overview of \tool\ is shown in Figure~\ref{fig1}, and Figure ~\ref{algs} shows the algorithms for stpng 2 to 5.

\subsection{Reward and Penalty}

Whenever a player submits a proposition request or votes True, the host verifies that the provided Hash(Input ID) and ID correspond to another. If incorrect, the player loses the stake and receives two negative points for intentionally submitting a wrong format. This makes security attacks such as Denial of Service (DoS) expensive.

If most verifiers vote True for a proposition, the submitter receives four positive points, the verifiers that voted True receive one positive point, and the verifiers that voted False lose their stake and receive one negative point. If most verifiers vote False for a proposition, the submitter receives two negative points and loses the stake. The verifiers that voted True or False do not get any points. If a verifier votes True with the correct ID but remains in the minority, the wasted computational effort/cost is the penalty. Therefore, submitters take a higher risk and receive a higher reward for submitting a correct proposition. This is because one of the main factors determining the quality of the network is the quality of the submitted proposition requests.

Players receive a financial reward after a termination condition is met. This condition can be a fixed period, a number of propositions, or a certain number of Points. If the player point of a node is over zero, they get a financial reward based on their accumulated points using the below formula. In this formula, $p_{v_i}$ is the player point of a given player, ($u(V_j)$ is a Heaviside function, and $\sum_{j=1} p_{v_j} u(V_j)$ is the summation of the player points of all the players with positive points. 

\[\text{Verifier Prize} = \text{Prize Pool} \cdot \frac{ p_{v_i}}{\sum_{j=1} p_{v_j}u(V_j)} \] 

Each player's point gets reset after they are paid. Providing a financial reward based on a point system has several benefits. First, it motivates the verifiers to partake in the game continuously. It also incentivizes continuous honest behavior. If a verifier loses some points, they put more effort into the next propositions, and if a verifier gains some points, they are motivated to continue the honest behavior. It reduces the number of monetary transactions in the system, helps prevent lazy voting, and reduces Sybil attacks.
Mathematically, if we have a total of $N$ nodes and we choose $n$ random verifiers for every proposition, a Sybil attack is successful if the attackers gain $n/2$ verifiers for every proposition. Therefore, since verifiers are picked at random in \tool's design, the attack is only successful if the adversary gains $N - n/2$ nodes. The success rate becomes significantly low if $N$ is relatively larger than $n$. Furthermore, the attack is partially successful if the attackers gain $n/2$ verifiers for at least 50\% of the propositions. In this case, the attackers must gain at least 51\% of the total nodes. This means that while gaining 51\% of the power generally means a successful Sybil attack in most other blockchain applications, it only results in partial success in the case of \tool\ due to its design.

\subsection{Limitations of \tool}
\tool \ is a DON designed for counting systems or any system requiring precise counting, and it may not be suitable for most other applications. Moreover, like most other blockchain or decentralized applications, the quality of the network relies on the diversity and the number of available nodes. Additionally, CountChain requires an owner to set certain parameters. This limitation can be removed if, instead of an owner, the majority of the participants, a bigger blockchain like Ethereum or a smart contract sets these parameters.

\section{Game Theory: The Verifier's Dilemma}
\label{game}

Lazy voting and Nash Equilibrium (NE) are essential concepts when designing a DON. Lazy voting is when players blindly vote True or False, knowing either option is more likely to win. This increases dishonest voting and can significantly make the outcome inaccurate. Therefore, a DON needs to be designed to prevent lazy voting.

In \tool, lazy voting True is impossible as verifiers must provide proof when voting True. On the other hand, the best reward for voting False is not to get penalized. This means lazy voting False has a high risk (losing money and points) and no reward. Therefore, verifiers are only incentivized to vote False if they perform the computation and ensure that the correct vote is False. Also, not voting is considered voting False. This ensures inactive players are not incentivized to stay in the game. Moreover, submitting a False proposition request gets financial and point punishment; hence, submitters need to make an effort to ensure their submission is valid.

A Nash equilibrium (NE) is a situation in game theory in which no player can benefit by changing their strategy, given that all other players' strategies remain the same. Analyzing NE ensures that the DON design incentivizes honest behavior.

In \tool's design, in each round, every verifier needs to make two decisions. These are depicted in \Cref{fig:extensive_form_game}.

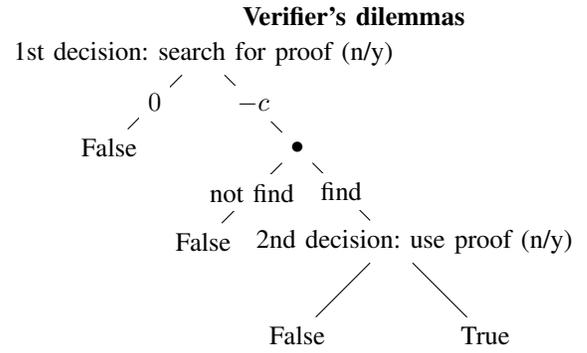
\begin{figure}[t]
    \centering
       \begin{tikzpicture}
        \def\dx{1.25}
        \def\dy{1.25}
            \node (t0) at (2,0.5) {\textbf{Verifier's dilemmas}};
            \node (n00) at (0,0) {1st decision: search for proof (n/y)};
            
            \node (n10) at (-1*\dx,-1*\dy) {False};
            \node (n11) at (1*\dx,-1*\dy) {$\bullet$};
            
            \node (n20) at (0,-2*\dy) {False};
            \node (n21) at (2*\dx,-2*\dy) {\ \ \ \ \ 2nd decision: use proof (n/y)};

            \node (n30) at (1*\dx,-3*\dy) {False};
            \node (n31) at (3*\dx,-3*\dy) {True};

            \draw (n00) to node[pos=0.5,fill=white] {$0$} (n10);
            \draw (n00) to node[pos=0.5,fill=white] {$-c$} (n11);
            
            \draw (n11) to node[pos=0.5,fill=white]{not find} (n20);
            \draw (n11) to node[pos=0.5,fill=white]{find} (n21);

            \draw (n21) to node{} (n30);
            \draw (n21) to node{} (n31);

    \end{tikzpicture}
    \caption{ A verifier needs to make two decisions: whether to put effort with sunk cost $-c$ and search for proof and if they find proof, whether to use it and vote "True" or ignore it and vote "False". }
    \label{fig:extensive_form_game}
\end{figure}

\begin{table}[t]
\centering
    \setlength{\extrarowheight}{3pt}
    \begin{tabular}{cr|r@{\hskip 2pt}l|r@{\hskip 2pt}l|}
      & \multicolumn{1}{c}{} & \multicolumn{4}{c}{Majority}\\
      & \multicolumn{1}{c}{} & \multicolumn{2}{c}{True}  & \multicolumn{2}{c}{False} \\\cline{3-6}
      \multirow{2}*{Verifier}  & True & $[1,0]$, & $[1,0]$  & $[0,0]$, & $[0,0]$  \\\cline{3-6}
      & False & $[-1,-1]$, & $[1,0]$ & $[0,0]$, & $[0,0]$  \\\cline{3-6}
    \end{tabular}\vspace{0.4cm}
    \caption{Summary of payoffs. The entries in each bracket correspond to [Point, Stake].}
    \vspace{-0.7cm}
    \label{tab:bne}
  \end{table}
  
The first decision is whether or not to search for proof. Searching for proof costs $c>0$ for the verifier, i.e., the verifier earns $-c$ from searching, whereas doing nothing incurs a cost of $0$ where $c$ is the computational cost. If a verifier does not search for proof, they may only vote "False" according to the rules of the protocol. The same holds if they search for proof but do not find one. However, if they find proof, they face the second dilemma: whether to use this proof and vote "True" or ignore the proof and vote "False." \par

The utility of a verifier depends on its decision and the behavior of other verifiers. From \Cref{tab:bne}, it is immediate that first, it is optimal for each verifier to be in the majority and second, it is a weakly dominant strategy for a verifier to vote True whenever they possess a valid proof. If the majority votes True, the verifier is strictly better to vote True and if the majority votes False, the verifier is indifferent.\par
The game between verifiers may possess many Nash equilibria. For instance, if all (other) verifiers engage in lazy voting, then it is best for a remaining verifier to do nothing and vote False by default. The reason is that if every other verifier is voting False, then no proposition will be decided as True, and incurring the cost of searching for proofs will only result in a negative utility for an honest verifier. \par
The main question we need to answer is whether honest behavior, i.e., searching for proofs and submitting them (voting True) if one finds them, is a Nash equilibrium in the game. Assuming a fraction $p_T\in(0,1]$ of True propositions, this is answered affirmatively in \Cref{thm:nash} under the mild assumption that $2p_T>c$, where $c$ denotes the cost of searching for a valid proof (typically assumed to be very low since it only involves the calculation of a small number of hashes).

\begin{theorem}[Incentive Compatibility of \tool]\label{thm:nash}
Assume that a fraction $p_T\in(0,1]$ of propositions are True and that searching for a valid proof incurs a cost $c>0$ to the verifier. Then, if all other verifiers are behaving honestly, i.e., search for valid proofs and vote True whenever they find one, then it is best for a remaining verifier to also behave honestly if and only if $2p_T>c$.
\end{theorem}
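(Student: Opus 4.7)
The plan is to reduce the verifier's best-response computation to a scalar comparison by exploiting that, when all other verifiers are honest, the majority verdict is always correct. On a True proposition every other honest verifier locates proof and votes True, so the majority is True; on a False proposition no proof exists, so the majority is False. The remaining verifier therefore faces a deterministic (given the proposition's type) majority outcome, with that type being True with probability $p_T$.

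First, I would compute the expected point utility of the honest strategy. With probability $p_T$ the proposition is True, the verifier finds proof and votes True, joining the True majority for a payoff of $+1$; with probability $1-p_T$ the proposition is False, the verifier finds nothing and votes False, joining the False majority for a payoff of $0$. Subtracting the sunk search cost $c$, honesty yields expected per-proposition utility $p_T - c$.

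Next, I would enumerate the pure deviations and identify the strongest competitor. The candidates are (i) \emph{lazy False} (skip searching, default-vote False), (ii) search, find proof, ignore it, and vote False, and (iii) vote True without valid proof. Strategy (iii) is ruled out by the protocol's penalty clause (stake forfeit and two negative points for wrong format), hence strictly dominated. Strategy (ii) is strictly dominated by (i) because both end in a False vote but (ii) wastes the cost $c$. The surviving deviation, lazy False, gives $-1$ on True propositions (minority side) and $0$ on False ones, for expected utility $-p_T$.

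Comparing the two survivors, honesty is a strict best response iff $p_T - c > -p_T$, which rearranges to $2p_T > c$; conversely, $2p_T \leq c$ makes lazy False weakly preferable to honesty, establishing the "only if" direction. The main obstacle I would spend care on is the claim that no mixed or partial strategy (e.g., searching on a random subset of propositions, or randomizing between voting True and False after finding proof) beats the best pure deviation. This is handled by noting that propositions are independent and symmetric, so the per-proposition decision problem reduces to the three pure actions priced above, and the expected utility of any mixture is a convex combination of those pure payoffs, hence bounded by the best one. Once this is in hand, the theorem follows from the elementary comparison $p_T - c \gtrless -p_T$.
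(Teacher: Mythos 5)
Your proposal is correct and follows essentially the same route as the paper's proof: exploit that honest peers make the majority verdict deterministic given the proposition's type, compute the expected utilities $p_T-c$ for honest search-and-vote-True, $-p_T$ for defaulting to False, and $-p_T-c$ for the dominated search-then-ignore action, and compare to obtain the threshold $2p_T>c$. Your convex-combination remark about mixtures matches the paper's explicit computation of the mixed utility $x(2p_T-c)-p_T$, so there is no substantive difference.
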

\begin{proof}
Assuming that all verifiers behave honestly ensures that a fraction of $p_T$ propositions will be decided as True, i.e., all True propositions will be decided as True and all False propositions will be decided as False. Thus, according to \Cref{tab:bne}, the expected utilities of a single verifier from their possible actions $\{no-search, search-false, search-true\}$ are the following
\begin{align*}
    \mathbb{E}[\text{no-search}]&= p_T(-1)+(1-p_T)\cdot0 = -p_T\\
   \mathbb{E}[\text{search-false}]&= -c + p_T(-1)+(1-p_T)\cdot0= -p_T-c\\
    \mathbb{E}[\text{search-true}]&= -c + p_T\cdot1+(1-p_T)\cdot0= p_T-c
\end{align*}
Clearly, the action search-false is dominated by no-search, since $-p_T-c<-p_T$ for $c>0$. Thus, to determine a verifier's optimal action, we only need to consider the actions "search" and "vote True" whenever one possesses a valid proof. The expected utility of a verifier who selects "search-true" $x\in[0,1]$ fraction of time and "no-search" $1-x$ fraction of time is 
\begin{align*}
    \mathbb{E}[x]&= x\left[-c+p_T\cdot 1+(1-p_T)\cdot 0\right]\\&\phantom{=\,\,}+(1-x)\left[p_T\cdot(-1)+(1-p_T)\cdot0\right] \\&=x(2p_T-c)-p_T.
\end{align*}
This is increasing in $x$ whenever $2p_T>c$ as claimed. In particular, the verifier's utility is optimized for $x=1$ which corresponds to always honest behavior.  
\end{proof}

In general, assuming that $p$ of propositions are decided as True for any any $p\in[0,1]$ (not necessarily equal to $p_T$), then the statement of the \Cref{thm:nash} continues to hold, i.e., it is the best response to be honest as long as $2p>c$.

\section{Experimental Evaluation}
\label{exp}
\subsection{Preliminary and Experimental Setup}
As stated in section ~\ref{related}, the available designs require a major change to be used for counting systems. \tool \ is created to fill a pervasive need for a practical DON for counting systems. As a result, we performed multiple experiments to test the feasibility of the design of  \tool\ and determine the optimal system parameters. The experiments were run on an Azure E96ds v5 virtual machine with 96 VCPUs and 672 GiB memory on Ubuntu 20.04 OS.

\begin{figure}[t]
\centerline{\includegraphics[width= 0.7 \textwidth/2]{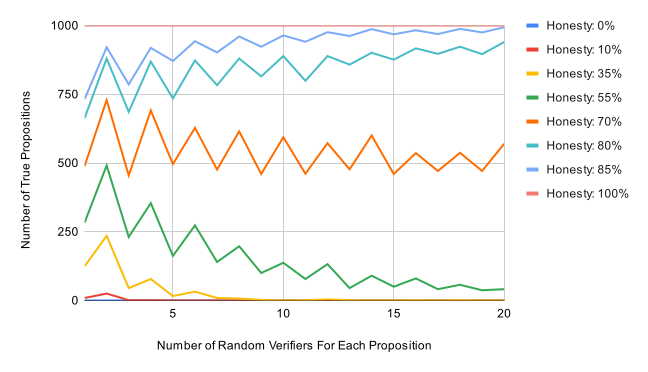}}
\centering
\caption{Optimal Honesty Rate and the Number of Verifiers}
\label{ex1}
\end{figure}

\subsection{Experiment 1: Optimal System Parameters}

This experiment aims to demonstrate the impact of system parameters. These parameters are each node's honesty rate, the number of chosen random verifiers for each proposition, and the total number of nodes. Honesty rate mimics the rate of accuracy or honesty of a node, as nodes may intentionally or unintentionally vote incorrectly for a proposition. 

In this experiment, We first varied the honesty rate for each node from 0\% to 100\%, in 5\% increments. 0\% means "not honest" and 100\% means "most honest". We changed the number of random verifiers assigned to each proposition from 1 to 20 with increments of 1 for each honesty rate. Each time, we generated and sent 1,000 random valid input data for a total of 420,000 valid data at the rate of 10 data per second. Ideally, we expected to see 1,000 True propositions for each scenario, so we used these values as the ground truth. The verifiers were picked at random from a pool of 100 available nodes. We used the following values for other parameters: proposition price :1, proposition deadline: 2 sec, proposition Delay: 1 sec, and Threshold Point: -5.

The summary of the results is shown in Figure~\ref{ex1}. As expected, the best outcome was when the honesty rate of all nodes was 100\%, whereas the worst result occurred when the rate was 0\%. Also, when the rate was 10\% or higher, all the propositions were successfully raised, and in the case of 5\%, only a small number of propositions were not submitted. This is because all nodes have access to the data and can initiate a proposition request. Therefore, this result has not been included in the charts above to avoid duplicity.  

The first noticeable accuracy loss becomes evident between 85\% and 80\%. This is because as the honesty rate of all nodes decreases, the chance of the majority being on the inaccurate side increases. As shown, a precision level of up to 99\% can be achieved with the honesty rate of 85\%, whereas the maximum accuracy attainable is 95\% for the rate of 80\%. Hence, it is recommended to configure the system to ensure all nodes possess an honesty rate of at least 85\%. This is achieved by filtering out nodes that have been proven unreliable. Additionally, for this honesty rate, the highest level of precision was observed for 20, 18, and 14 verifiers, resulting in 994, 988, and 987 True propositions, respectively. Thus, the recommended optimal number of chosen verifiers per proposition is fourteen, which reasonably balances accuracy and resource allocation.

Furthermore, for the honesty rate of over 80\%, increasing the number of verifiers improves the precision of the outcomes. This is the opposite when the honesty rate is below 60\%. This is because most verifiers are honest if the honesty rate is high for a given proposition. Additionally, an even number of verifiers yields higher accuracy. This is because the submitter's vote is only considered and counted as True in the event of a tie in the overall vote, which can only occur when the number of verifiers is even. Nevertheless, this observation highlights that this rule improves the precision of the system's outcomes.

\begin{figure}[t]
\centerline{\includegraphics[width= 0.65 \textwidth/2]{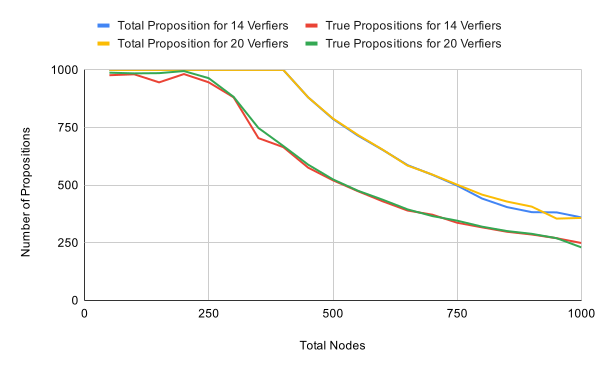}}
\centering
\caption{Optimal Number of Nodes}
\label{ex2}
\end{figure}

Then, we varied the number of total nodes from 50 to 1,000 in increments of 50. Fourteen verifiers were randomly picked for each proposition from the pool of nodes. The honesty rate for all the nodes was set to 85\%. We also repeated the experiment by picking 20 verifiers to compare the results. 

As shown in Figure~\ref{ex2}, increasing the number of nodes from 50 to up to 250 can improve the accuracy in both cases. However, the performance starts to decrease when there are more than 250 nodes. When the total number of nodes is more than 400, the number of raised propositions decreases as the number of nodes increases. Also, when the total number of nodes is more than 250, the number of True propositions decreases as the number of nodes increases. 

In our setup, 200 nodes yield the most favorable results. Furthermore, when fourteen verifiers were chosen from a pool of 200 nodes, 982 propositions out of the raised 1,000 were correctly decided as True. Meanwhile, this number is 995 when the number of verifiers is 20. This indicates a less than 1\% increase in accuracy for using six additional verifiers per proposition. These results demonstrate that fourteen verifiers from two hundred nodes achieve the optimal balance between accuracy and resource allocation.

\textbf{Performance Analysis: } Our analysis shows that every extra 50 nodes results in an extra 0.6 to 1 \% CPU and RAM usage and no noticeable disk usage. The main bottleneck is the allocated heap memory for the program. Even when set to max, the program can reach the maximum allocation. Thus, this is mainly a limitation of the available hardware resources than the design. Hence, this limitation should be handled if a distributed infrastructure is used as needed for a blockchain design. 

Also, On average, the latency (neglecting network delays) is about 116 ms, indicating that \tool\ can scale effectively. This is because the \tool's execution involves fixed data and proposition formats with adaptable system parameters and minimal storage costs due to small data sizes and short data lifespans with the option to transfer data to low-cost blockchains subsequently. The consensus protocol is efficient in time, cost, and resource usage, primarily requiring minimal computations from selected verifiers.

\begin{figure}[t]
\centerline{\includegraphics[width= 0.65 \textwidth/2]{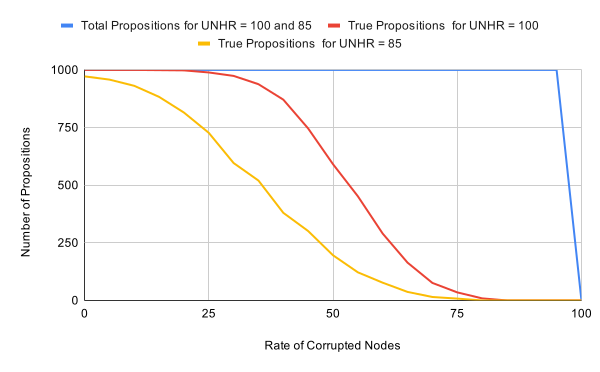}}
\centering
\caption{Sybil Attack Success Rate. UNHR = uncorrupted node honesty rate}
\label{ex3}
\end{figure}

\subsection{Experiment 2: Sybil Attack Success Rate}
\label{sybil}
A Sybil Attack is when an entity has numerous fake identities for malicious motives\cite{syb}. This experiment aims to simulate a Sybil attack and find the success rate of this attack on \tool. In this experiment, fourteen verifiers were picked from the pool of two hundred nodes. A Sybil attack is considered fully successful if the disruptive nodes either prevent proposition creation or cause all generated propositions to be incorrectly marked as False. 

In this experiment, we set the honesty rate of the corrupted nodes to 0\% and varied the rate of corrupted nodes from 0 to 100\% (in increments of 5). Generally, when determining the success rate of a Sybil attack, the honesty of all the honest nodes is considered 100\%. Therefore, we decided to have two sets of test cases with the honesty rate of honest nodes as 85\% and 100\%. For example, if there are 200 nodes where half are corrupted, 100 random nodes would have an honesty rate of 0, and the honesty rate for the rest would be 85\% and 100\%. The rest of the parameters are similar to the previous experiment.

As shown in Figure~\ref{ex3}, when the honesty rate of uncorrupted nodes was 100\%, even when the corrupted nodes gained 50\% of the network, over half of the propositions were correctly decided as True. The attack was not entirely successful until corrupted nodes gained at least 85\% of the network. Meanwhile, changing the honesty rate of uncorrupted nodes to 85\% significantly increased the attack's success rate. In this case, the attackers needed to gain 70\% of the network to launch the attack successfully. Another notable observation is that all the propositions were successfully raised in both cases until the corrupted nodes completely controlled the network. This is because all the data is sent to all the nodes, and as long as one honest node is active, they can raise the proposition. Although this does not prevent the attack, it can be used as reference data for the future. 

The results show that \tool\ performs well against Sybil attacks due to many factors included in the design. First, only a selected number of nodes can vote on a proposition. In most other DONs, all nodes can participate freely, so if an entity achieves the majority of the power in a network, they are guaranteed to succeed in the attack. In \tool, the attackers need to gain more than the majority for guaranteed success, as shown in this experiment. Also, the point and stake system increases the cost of DoS and Sybil attacks. Finally, nodes accumulating a few negative points are banned from the system, making a Sybil attack harder.

\textbf{Sybil Attack Prevention:} PoW, PoS, and DPoS are the main sybil control mechanisms used in Blockchain\cite{OSCMS}. PoW safeguards against Sybil attacks by requiring nodes to employ a limited resource to create new blocks.\cite{bitcoin}. However, this design has a high energy consumption\cite{book}. 

PoS assigs voting power based on the number of tokens locked to a node. Therefore, attacking the network becomes costly\cite{rad}. However, the more stakes a node has, the more blocks it can create, making it more powerful. Since the number of coins/tokens is limited, once a node gains 51\% of the total stake, it achieves the majority of the power that cannot be revoked later\cite{book}. 

DPoS is a type of PoS where token holders can delegate their tokens to node-runners, which increases the tokens required for a Sybil attacker. The main limitation of this system is that it can become centralized more easily when there are fewer delegates, as a malicious coalition of at least 51\% of delegates could control the network more easily.\cite{cryptoc}. 

\tool \ minimizes sybil attacks using system parameters that are adjustable by the host based on the needs. Also, few verifiers are randomly chosen from a pool of available verifiers, which, along with the point and stake system, makes sybil attacks more expensive. Finally, an authentication step can be added for each node. 

For the suggested parameters, if 66 out of 200 nodes (33\%) are dishonest, the probability of at least 8 dishonest out of 14 ($\ge 51\%$) random verifiers is 5\%:

\centerline{
$ \frac{\binom{66}{8} \binom{134}{6} + \binom{66}{9} \binom{134}{5} + \cdots + \binom{66}{14} \binom{134}{0}}{\binom{200}{14}} $
}

Plugging in other numbers confirms that the finding is aligned with Figure ~\ref{ex3} (red line).

\section{Conclusions and Future Work}
\label{Conclusion}

One of the main barriers to the mass adoption of blockchain is its limitation to interact with off-chain data. This paper outlines the design and implementation of CountChain, a blockchain oracle tailored for counting systems. This opens up many opportunities for billion-dollar industries, such as online advertising, which rely on counting systems, to adopt blockchain technology. In the future, we plan to expand our design to an industry-standard platform for commercial use as it possesses a tangible commercial potential that can improve decentralized ecosystems.

\bibliographystyle{IEEEtran}
\bibliography{Cover}

\end{document}